\newtheorem{thm}{Theorem}[section]
\newtheorem{lmm}[thm]{Lemma}
\theoremstyle{definition}
\newtheorem{dfn}[thm]{Definition}
\theoremstyle{remark}
\newtheorem{rem}[thm]{Remark}
\newenvironment{acknowledgement}{\par\medskip\emph{Acknowledgement.}}
\renewcommand\d{\mathrm d}
\newcommand\e{\mathrm{e}}
\def\le{\leqslant} \let\leq\le
\def\ge{\geqslant} \let\geq\ge
\def\Chi{\raisebox{.4ex}{$\chi$}}
\DeclareMathOperator{\tr}{tr\kern1pt}
\DeclareMathOperator{\supp}{\mathrm{supp}}
\DeclareMathOperator{\dist}{dist}
\DeclareMathOperator{\IE}{I\kern-.10em E}
\def\bigtimes{\mathop{\raisebox{-1.5pt}{\LARGE $\times$}}\limits}
\renewcommand{\AA}{\mathbb{A}}
\newcommand{\RR}{\mathbb{R}}
\newcommand{\Rd}{\mathbb{R}^d}
\newcommand{\Zd}{\mathbb{Z}^d}
\newcommand{\NN}{\mathbb{N}}
\newcommand{\CC}{\mathbb{C}}
\newcommand{\EE}{\mathbb{E}\mkern2mu}
\newcommand{\PP}{\mathbb{P}}
\newcommand{\cK}{\mathcal{K}}
\newcommand{\cU}{\mathcal{U}}
\newcommand{\cV}{\mathcal{V}}
\providecommand{\wtilde}{\widetilde}
\newif\ifper\pertrue
\def\per{.}
\def\au#1#2{#2, #1}
\def\lau#1#2{#2, #1,}
\def\et{ and }
\def\ti#1{#1\ifper,\fi\pertrue}
\def\bti{\@ifnextchar[\bbti\bbbti}
\def\bbti[#1]#2{\emph{#2}, #1,}
\def\bbbti#1{\emph{#1,}}
\def\z{\@ifnextchar[\zz\zzz}
\def\zz[#1]#2#3#4#5{\perfalse\emph{#2}, \textbf{#3} (#5), #4 [#1]}
\def\zzz#1#2#3#4{\emph{#1}, \textbf{#2} (#4), #3\ifper\per\fi\pertrue}
\def\pub{\@ifstar\pubstar\pubnostar}
\def\pubnostar{\@ifnextchar[\@@pubnostar\@pubnostar}
\def\@@pubnostar[#1]#2#3#4{#2, #3, #4, #1\ifper\per\fi\pertrue}
\def\@pubnostar#1#2#3{#1, #2, #3\ifper\per\fi\pertrue}
\def\pubstar[#1]#2#3#4{\perfalse #2, #3, #4 [#1]\pertrue}
\title{Uniform convergence of spectral shift functions}
\author{Peter D.\ \textsc{Hislop}\footnote{Department of Mathematics,
    University of Kentucky, Lexington, Kentucky  40506-0027, USA.
    \newline e-mail: \texttt{hislop@ms.uky.edu}}
          ~and Peter \textsc{M\"uller}\footnote{Mathematisches Institut,
  Ludwig-Maximilians-Universit\"at, Theresienstra\ss{e} 39,
  80333 M\"unchen, Germany. \endgraf e-mail: \texttt{mueller@lmu.de}}}
\keywords{Schr\"odinger operators, spectral shift function.}         
\begin{document}

\maketitle

\thispagestyle{empty}

\begin{abstract}
The spectral shift function $\xi_{L} (E)$ for a Schr\"odinger
operator restricted to a finite cube of length $L$ in multi-dimensional Euclidean space, with
Dirichlet boundary conditions, counts the number of eigenvalues
less than or equal to $E \in \RR$ created by a perturbation potential $V$.
We study the behavior of this function $\xi_L(E)$
as $L \rightarrow \infty$ for the case of a compactly-supported and bounded potential $V$.
After reviewing results of Kirsch [Proc.\ Amer.\ Math.\ Soc.\ \textbf{101}, 509--512 (1987)], and our recent pointwise convergence result for
the Ces\`aro mean [Proc.\ Amer.\ Math.\ Soc.\ \textbf{138}, 2141--2150 (2010)], we present a new result on the convergence of the energy-averaged spectral shift function that is uniform with respect to the location of the potential $V$ within the finite box.
\end{abstract}


\section{Statement of the Problem and Result}

In the late eighties, W.\ Kirsch \cite{Kir87, Kir89} investigated
the relative eigenvalue counting function $\xi_L (E)$ for compactly-supported
nonnegative perturbations $V$ of the nonnegative Laplacian $- \Delta_L \geq 0$ on cubes $\Lambda_L := ] -L / 2, L /2 [^d \subset \RR^d$ with Dirichlet boundary conditions.
The cubes $\Lambda_L$ of edge length $L>0$ are centered about the origin in $d$-dimensional Euclidean
space $\RR^{d}$. The real-valued local perturbation $V$ is
a compactly-supported, nonnegative potential $0 \le V \in
\mathrm{L}^{\infty}(\Rd)$. We emphasize that $V$ does not depend on
$L$ and, of course, $L$ is large enough so that $\supp(V) \subset
\Lambda_{L}$.

The finite-volume spectral shift function (SSF) or relative
eigenvalue counting function $\xi_L (E)$ for the pair of
self-adjoint operators $( - \Delta_L / 2 + V , - \Delta_L / 2)$ on the Hilbert space
$\mathrm{L}^2 (\Lambda_L)$, with Dirichlet boundary conditions on $\partial
\Lambda_L$, is defined as the real-valued function of $E\in\RR$ given by
 \begin{equation}
 \xi_L (E) := \# \bigg\{\begin{array}{l} \text{eigenvalues of $-\Delta_{L}/2$}\\
  \text{(counting\ multipl.)~ $\le E$}\end{array}\bigg\} - \# \bigg\{\begin{array}{l} \text{eigenvalues of $-\Delta_{L}/2 + V$}\\
  \text{(counting multipl.)~ $\le E$}\end{array}\bigg\} .
 \end{equation}
Kirsch was interested in the limit of $\xi_{L} (E)$ as $L \to\infty$. Perhaps surprisingly, Kirsch
proved \cite{Kir87} that if $d \in\NN\setminus\{1\}$ and if $\int_{\RR^{d}}\d x\,
V(x) >0$, then
  \begin{equation}\label{unbdd1}
    \sup_{L > 0} \xi_L (E) = \infty  \qquad\quad \text{for every~} E>0.
  \end{equation}
  Furthermore, he proved the existence of a countable dense set of
  energies $\mathcal{E} \subset [0, \infty[$ so that
  \begin{equation}\label{unbdd2}
    \sup_{L \in \NN} \xi_L (E) = \infty  \qquad\quad \text{for every~} E \in
  \mathcal{E}.
  \end{equation}
Clearly, the reason for this divergence is the growing degeneracy of
eigenvalues of the Laplacian in $d\ge 2$ dimensions as $L\to\infty$, which is
lifted by the perturbation $V$. In contrast, in $d=1$ space dimension, and also for
corresponding lattice systems in arbitrary dimension, the spectral shift function remains bounded in this limit,
as follows from a finite-rank-perturbation argument.

More generally, if one replaces the sequence $L_{n} =n \in\NN$ in \eqref{unbdd2} by another diverging sequence of lengths,
one would expect the set of ``bad'' energies $\mathcal{E}$ to change.
One might conjecture, however, that the largest set of energies $\mathcal{E}$, on which $\xi_{L}$ explodes, still has zero Lebesgue measure.
Although still unproven in this generality, this conjecture is strongly supported by Theorems~\ref{HiMu-1} and~\ref{HiMu-2} below.

In order to state these theorems, we need to list the hypotheses.
We write $\mathcal{K}(\RR^{d})$ and $\mathcal{K}_{\mathrm{loc}} (\RR^d)$ to denote the Kato class and the local
Kato class, respectively \cite{AiSi82, Sim82}. We say $U$ is Kato decomposable, $U\in\cK_{\pm}(\Rd)$, if
$\max\{0, U\} \in \cK_{\mathrm{loc}}(\RR^{d})$ and $\max\{0, -U\} \in \cK(\RR^{d})$.
In the following we consider two real-valued potential functions $U$ and $V$ on $\RR^{d}$ such that
\begin{equation}
  \label{ass}
  \tag{$\star$}
  U \in \cK_{\pm}(\RR^{d}), \qquad  0\le V \in\cK_{\mathrm{loc}}(\RR^{d}) , \quad \supp (V) \subseteq \Lambda_{\ell} \text{~ for some~} \ell>0.
\end{equation}
 We also introduce the
corresponding infinite-volume self-adjoint Schr\"odinger operators $H_{0} :=
-(\Delta/2) + U$ and $H_{1}:= H_{0} + V$ on $\mathrm{L}^{2}(\RR^{d})$.
Their self-adjoint finite-volume Dirichlet restrictions $H_{0}^{(L)}$ and
$H_{1}^{(L)}$ to $\mathrm{L}^{2}(\Lambda_{L})$ have compact resolvents and, therefore, discrete spectrum. For a given energy $E \in \RR$,
let $N_0^{(L)}(E)$, resp.\ $N_1^{(L)}(E)$, denote the number of eigenvalues,
including multiplicity, for $H_0^{(L)}$, resp.\ $H_1^{(L)}$, less than or
equal to $E$.  These are both monotone increasing functions of the energy $E$.
We define the relative eigenvalue counting function by
\begin{equation}\label{ssf2}
  E \mapsto \xi_L (E) \equiv \xi (E; H_1^{(L)}, H_0^{(L)})
  := N_0^{(L)}(E) -N_1^{(L)}(E) \geq 0
\end{equation}
for all $E \in \RR$.  It is known that this function is equal to the (more
generally defined) \emph{spectral shift function} for the pair $(H_1^{(L)},
H_0^{(L)})$, see e.g.\ \cite{Yaf92, BiYa93} or Eq.\ (5.1) in the Appendix of \cite{HiMu10}.

\begin{thm}
 \label{HiMu-1}
  Let $d\in\NN$ and assume \eqref{ass}. Then, we have
  \begin{equation}\label{ssf-bound2}
    \lim_{L \rightarrow \infty} \int_\RR \!\d E\; \xi_{L}(E) \, f(E) =
    \int_\RR\!\d E \; \xi(E)\; f(E)
  \end{equation}
 for every function $f$ of the form $f=\Chi_{I} g$, where $g \in C(\RR)$ is continuous and $\Chi_{I}$ is the indicator function of a (finite) interval $I \subset\RR$. In particular, for Lebesgue-almost all $E \in \RR$ we have
  \begin{equation}\label{eq-ssf4}
  \lim_{\delta \downarrow 0} \lim_{L \rightarrow \infty}
    \frac{1}{\delta} \int_{E}^{E+\delta} \!\d E'\, \xi_{L}(E') = \xi(E).
  \end{equation}
\end{thm}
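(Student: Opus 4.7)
My plan is to use Krein's trace formula to rewrite $\int_\RR\!\d E\,\xi_L(E) f(E)$ as a trace-difference of operator functions and to pass to the limit at the operator level. I first note that equation \eqref{eq-ssf4} is a consequence of \eqref{ssf-bound2}: applied to the characteristic function $f=\Chi_{[E,E+\delta]}$ (of the admissible form with $g\equiv 1$), equation \eqref{ssf-bound2} gives
\begin{equation*}
  \lim_{L\to\infty} \frac{1}{\delta}\int_E^{E+\delta}\!\d E'\,\xi_L(E') = \frac{1}{\delta}\int_E^{E+\delta}\!\d E'\,\xi(E').
\end{equation*}
Since $\xi\in\mathrm{L}^1(\RR)$ as the spectral shift function of the trace-class perturbation $V$, with $\|\xi\|_1=\int_{\Rd}\!\d x\, V(x)$, Lebesgue's differentiation theorem yields \eqref{eq-ssf4} at every Lebesgue point of $\xi$, hence for almost every $E\in\RR$.

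To prove \eqref{ssf-bound2} I would set $\phi(E):=\int_{-\infty}^E\!\d E'\, f(E')$, which is a bounded Lipschitz function, constant outside the finite interval $I$. By Krein's trace formula---extended from $C_c^1$ test functions to Lipschitz antiderivatives of functions of the form $\Chi_I g$ by mollification, with the uniform bound $\|\xi_L\|_1=\int_{\Rd}\!\d x\, V(x)$ dominating the error---one obtains
\begin{equation*}
  \int_\RR\!\d E\,\xi_L(E) f(E) = \tr\bigl[\phi(H_0^{(L)}) - \phi(H_1^{(L)})\bigr],
\end{equation*}
together with the analogous identity for the infinite-volume pair $(H_0,H_1)$. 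The theorem thus reduces to establishing
\begin{equation*}
  \tr\bigl[\phi(H_0^{(L)}) - \phi(H_1^{(L)})\bigr] \xrightarrow{\;L\to\infty\;} \tr\bigl[\phi(H_0) - \phi(H_1)\bigr].
\end{equation*}

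For this operator-level convergence I would use a Helffer--Sj\"ostrand (or, alternatively, a semigroup/Laplace) representation of $\phi$, reducing the problem to trace-norm convergence of differences of the form $(H_1^{(L)}-z)^{-1} V (H_0^{(L)}-z)^{-1} \to (H_1-z)^{-1} V (H_0-z)^{-1}$ integrated against a density in $z$. The compact support of $V$ is decisive: inserting a spatial cutoff $\chi\in C_c^\infty(\Rd)$ equal to one on $\supp V$ localizes the operator being traced inside a fixed bounded region far from $\partial\Lambda_L$ once $L$ is sufficiently large. Combined with strong resolvent convergence $H_j^{(L)}\to H_j$ (standard for Dirichlet Laplacians on expanding cubes with Kato-class potentials) and with Kato-class mapping properties yielding the required trace-ideal estimates uniformly in $L$, this gives the required trace-norm convergence.

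The main obstacle is precisely this trace-norm step: the Dirichlet restrictions $H_j^{(L)}$ converge to $H_j$ only in the \emph{strong} (not norm) resolvent sense, so one cannot invoke continuity of the functional calculus directly. The compact support of $V$---through a geometric-resolvent or Duhamel identity---must be exploited to confine the operator difference to a bounded region, in which Kato-class regularity and heat-kernel estimates uniform in $L$ upgrade strong convergence to trace-norm convergence. This localization is the technical heart of the argument.
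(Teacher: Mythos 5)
Your proposal takes a genuinely different route from the one the paper follows (the paper defers the proof of Theorem~\ref{HiMu-1} to \cite{HiMu10}, but deploys the same machinery in this paper for the refined Theorem~\ref{uniform}). The paper's method is to pass to Laplace transforms: one writes $\int_\RR \d E\, \e^{-tE}\,\xi_L(E) = t^{-1}\tr\big[\e^{-tH_0^{(L)}} - \e^{-tH_1^{(L)}}\big]$, expresses the right side through the Feynman--Kac formula as a Brownian-bridge expectation, passes to the limit $L\to\infty$ by dominated convergence of the resulting path integrals, and then invokes a continuity theorem for Laplace transforms (Theorem~\ref{cont-laplace} here) to upgrade pointwise convergence of $\wtilde{\xi}_L(t)$ to vague convergence of the measures $\xi_L\,\d E$. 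That continuity theorem is exactly what delivers \eqref{ssf-bound2} for test functions $f=\Chi_I g$. Your route instead goes through Krein's trace formula and a Helffer--Sj\"ostrand (or semigroup) calculus, aiming at trace-norm convergence of operator differences.

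The difficulty is that you explicitly stop short of the decisive step. You write that ``this localization is the technical heart of the argument'' --- upgrading strong resolvent convergence $H_j^{(L)}\to H_j$ to \emph{trace-norm} convergence of the localized resolvent differences, with estimates uniform in $L$ and integrable against the $z$-density, for a $\phi$ that is merely Lipschitz (because $f=\Chi_I g$ has jumps at $\partial I$) --- and then you do not carry it out. That step is precisely where the content of the theorem resides; without it you have a strategy, not a proof. The Laplace-transform route is designed to avoid exactly this: in the Feynman--Kac picture only paths meeting $\supp V$ contribute to the trace difference, such paths stay inside $\Lambda_L$ with probability tending to one with Gaussian tails, and the $L\to\infty$ limit becomes elementary; no trace-ideal machinery or functional-calculus continuity is needed, only Feller's continuity theorem for Laplace transforms of measures (here the sign condition $V\ge 0$ in \eqref{ass} is convenient, since it makes $\xi_L\,\d E$ a positive measure).

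A further inaccuracy worth flagging: the identity $\|\xi\|_1 = \int_{\Rd}\d x\, V(x)$ you invoke does not hold in this setting. The multiplication operator $V$ is not trace class, and the infinite-volume SSF $\xi$ is not in $\mathrm{L}^1(\RR)$ in general (by Weyl-type asymptotics it fails to decay, and for $d\ge 3$ it even grows, at high energy). What your Lebesgue-differentiation step for \eqref{eq-ssf4} actually requires is only $\xi\in\mathrm{L}^1_{\mathrm{loc}}(\RR)$, which does hold; similarly, for your mollification argument extending Krein's formula to the Lipschitz antiderivative $\phi$, what you would need is a uniform-in-$L$ bound on $\|\xi_L\|_{\mathrm{L}^1(I')}$ over a slightly enlarged interval $I'\supset I$, not a global $\mathrm{L}^1$ bound.
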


We refer to \cite{HiMu10} for a proof of the theorem, see also \cite{GeKoSc95}.
Kirsch's result \eqref{unbdd1} shows that one cannot get rid of the
energy smoothing in \eqref{eq-ssf4}, that is, the limits $\delta \downarrow
0$ and $L\to\infty$ must not be interchanged. The best one could hope for is
convergence Lebesgue-almost everywhere  of $(\xi_{L_{j}})_{j\in\NN}$ for
sequences of diverging lengths. The next theorem is a
partial result in this direction.
\begin{thm}
  \label{HiMu-2}
  Let $d\in\NN$ and assume \eqref{ass}. Then, for every sequence of lengths
  $(L_{j})_{j\in\NN}$ $\subset \, ]0,\infty[$ with $\lim_{j\to\infty} L_{j} =
  \infty$ there exists a subsequence $(j_{i})_{i\in\NN} \subset\NN$ with
  $\lim_{i\to\infty} j_{i} =\infty$ such that for every subsequence
  $(i_{k})_{k\in\NN} \subset\NN$ with $\lim_{k\to\infty} i_{k} =\infty$ we
  have
  \begin{equation}
    \label{ces-in}
    \lim_{K\to\infty} \frac{1}{K} \sum_{k=1}^{K} \xi_{\wtilde{L}_{k}}(E) \leq \xi(E)
  \end{equation}
  for Lebesgue-almost all $E\in\RR$. Here we have set $\wtilde{L}_{k}
  :=L_{j_{i_{k}}}$ for all $k\in\NN$.
\end{thm}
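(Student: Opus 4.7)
My plan is to combine Theorem~\ref{HiMu-1} with a Komlós-type almost-everywhere subsequence extraction. Applying Theorem~\ref{HiMu-1} with continuous $g \equiv 1$ and $f = \Chi_{[-R,R]}$ yields $\int_{-R}^{R} \xi_{L}(E)\,\d E \to \int_{-R}^{R} \xi(E)\,\d E < \infty$ as $L \to \infty$, so for each $R > 0$ the sequence $(\xi_{L_{j}})_{j \in \NN}$ is uniformly bounded in $\mathrm{L}^{1}([-R,R])$. I would then invoke Komlós's subsequence theorem: for any bounded sequence in $\mathrm{L}^{1}$ of a $\sigma$-finite measure space, there is a subsequence such that every further subsequence has Cesàro averages converging almost everywhere to the \emph{same} limit in $\mathrm{L}^{1}$. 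Combined with a standard diagonal extraction over $R = 1, 2, \dots$, this produces a subsequence $(j_{i})_{i \in \NN}$ with the property that, for any further subsequence $(i_{k})_{k \in \NN}$ and with $\widetilde{L}_{k} = L_{j_{i_{k}}}$, the Cesàro averages $\frac{1}{K}\sum_{k=1}^{K} \xi_{\widetilde{L}_{k}}(E)$ converge for Lebesgue-almost every $E \in \RR$ to a common nonnegative function $\tilde{\xi} \in \mathrm{L}^{1}_{\mathrm{loc}}(\RR)$.

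The second step is to verify $\tilde{\xi} \le \xi$ a.e.\ by testing against nonnegative continuous functions of compact support. Fix $g \in C(\RR)$ with $g \ge 0$ and $\supp(g) \subseteq I$ for some bounded interval $I$. Then Theorem~\ref{HiMu-1} gives $\int_{\RR} \xi_{L_{j}}(E)\,g(E)\,\d E \to \int_{\RR} \xi(E)\,g(E)\,\d E$ along the full sequence, hence the Cesàro averages of any subsequence of this numerical sequence converge to the same limit. Applying Fatou's lemma to the pointwise a.e.\ convergent sequence of nonnegative functions $\frac{1}{K}\sum_{k=1}^{K} \xi_{\widetilde{L}_{k}}(\cdot)\,g(\cdot)$ gives
\[
\int_{\RR} \tilde{\xi}(E)\,g(E)\,\d E \;\le\; \liminf_{K \to \infty} \frac{1}{K}\sum_{k=1}^{K} \int_{\RR}\xi_{\widetilde{L}_{k}}(E)\,g(E)\,\d E \;=\; \int_{\RR} \xi(E)\,g(E)\,\d E .
\]
Arbitrariness of $g$ then yields $\tilde{\xi} \le \xi$ Lebesgue-almost everywhere, completing the proof.

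The main technical obstacle is the invocation of Komlós's theorem, a nontrivial $\mathrm{L}^{1}$ result that bypasses the non-reflexivity of $\mathrm{L}^{1}$ and the lack of uniform integrability of $(\xi_{L})$. The one-sided inequality rather than equality in \eqref{ces-in} reflects precisely this deficit: mass of the measure $\xi_{L}(E)\,\d E$ can concentrate onto sets of vanishing Lebesgue measure (as made explicit by Kirsch's divergence results \eqref{unbdd1} and \eqref{unbdd2}), so the a.e.\ Cesàro limit captures only the absolutely continuous part of the vague limit $\xi(E)\,\d E$.
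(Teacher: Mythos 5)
Your argument is correct and matches the paper's approach exactly: the paper (which defers the detailed proof to \cite{HiMu10}) explicitly says the proof "relies on Theorem~\ref{HiMu-1}, the fact that $V$, and hence $\xi_{L}$, has a definite sign, and a deep result of Koml\'os," which is precisely the $\mathrm{L}^{1}_{\mathrm{loc}}$-boundedness from Theorem~\ref{HiMu-1}, the Koml\'os subsequence extraction with diagonalization, and the Fatou comparison against nonnegative test functions that you carry out.
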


The simple proof \cite{HiMu10} of Theorem~\ref{HiMu-2} relies on Theorem~\ref{HiMu-1}, the fact that $V$, and hence $\xi_{L}$, has a definite sign, and a deep result of Koml\'os. For one-dimensional systems, equality in (\ref{ces-in}) has recently been shown in \cite{BoMa09}. We refer to the literature cited in \cite{HiMu10} for further estimates of spectral shift functions.

In this note we are interested in refining the convergence in Theorem~\ref{HiMu-1} for the case where the background potential $U$ is periodic with period $p = (p_{1}, \ldots, p_{d}) \in\Rd$. In this situation, one can consider the potential
$V$ centered at any point $x_0 + x \in \Lambda_{L}$, with $x \in p\Zd$ and $x_0 \in \Lambda_L$, so that $
\Lambda_\ell (x_0) \subset \Lambda_L$. 
One expects that the limit in \eqref{ssf-bound2} is independent of the shift $x \in p \Zd$. We prove that this is true and that the 
convergence in \eqref{ssf-bound2} is uniform in the shift vectors $x \in p \Zd$.

To formulate this precisely, we introduce some more notation.
Recall that $\ell >0$ was defined by $\supp(V) \subseteq \Lambda_{\ell}$.
We choose a security distance function $D$ with the properties  $D(L) \le (L -\ell)/2$ for
all $L\ge \ell$ and $\lim_{L\to\infty} D(L) =\infty$.
For $x_{0} \in\RR^{d}$ we define the set of allowed shifts
\begin{equation}
\label{shift}
 A_{L}(x_{0}) := \Big\{ x_{0}+ x : x \in p\Zd \text{~and~} \dist\big(\Lambda_{\ell}(x_{0} + x), \RR^{d} \setminus \Lambda_{L}\big) > D(L) \Big\}
\end{equation}
in the box $\Lambda_{L}$. Here we used the notation $\Lambda_{\ell}(y)$ to indicate that the cube is centered about $y\in\Rd$ instead of the origin. Shifts in $A_{L}(x_{0})$ differ only by a multiple of the period of $U$, and it is enough to consider $x_{0} \in \raisebox{-1pt}{\Large$\times$}_{j=1}^{d} [0, p_{j}[$ in the periodicity cell of $U$.
Given $y\in\RR^{d}$ we write $V_{y} := V(\,\pmb{\cdot}\, -  y)$ for the shifted perturbation potential and note that if $y\in A_{L}(x_{0})$, then $\supp(V_{y})$ respects the security distance $D(L)$ to the boundary of $\Lambda_{L}$. Finally we define the Schr\"odinger operator $H_{1,y} := H_{0} + V_{y}$ with the shifted perturbation potential along with its finite-volume Dirichlet restrictions $H_{1,y}^{(L)}$. We do not require $V\ge 0$ any more.

\begin{thm}
  \label{uniform}
    Let $d\in\NN$ and let $U,V \in \cK_{\pm}(\Rd)$. Assume that $U$ is periodic with period $p\in\Rd$ and $\supp(V) \subseteq \Lambda_{\ell}$ for some $\ell >0$. Fix an arbitrary point $x_{0} \in \raisebox{-1pt}{\Large$\times$}_{j=1}^{d} [0, p_{j}[$ in the periodicity cell of $U$ and a sequence $(L_{n})_{n\in\NN} \subset[\ell, \infty[$ of diverging lengths, $\lim_{n\to\infty} L_{n} = \infty$.
Then,
  \begin{equation}\label{ssf-uniform}
    \lim_{n \rightarrow \infty} \int_\RR \!\d E\; \xi\big(E; H_{1, x_{L_{n}}}^{(L_{n})}, H_{0}^{(L_{n})}\big) \, f(E) =
    \int_\RR\!\d E \; \xi(E; H_{1, x_{0}}, H_{0})\; f(E)
  \end{equation}
holds uniformly in $(x_{L_{n}})_{n\in\NN} \in \raisebox{-1pt}{\Large$\times$}_{n\in\NN}  A_{L_{n}}(x_{0})$ for every given function $f$ of the form $f=\Chi_{I} g$, where $g \in C(\RR)$ is continuous and $\Chi_{I}$ is the indicator function of a (finite) interval $I \subset\RR$.
\end{thm}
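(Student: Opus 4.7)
The plan is to use the periodicity of $U$ to convert the varying shift vector into a varying (translated) cube, and then to revisit the proof of Theorem~\ref{HiMu-1} to check that its estimates are uniform in the cube's position.

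\emph{Step 1 (reduction via periodicity).} Given $x_{L_n}\in A_{L_n}(x_0)$, set $y_n:=x_{L_n}-x_0\in p\Zd$ and $\tilde\Lambda_n:=\Lambda_{L_n}-y_n$. The unitary translation $(T_n\psi)(z):=\psi(z+y_n)$ from $\mathrm{L}^2(\Lambda_{L_n})$ onto $\mathrm{L}^2(\tilde\Lambda_n)$ conjugates $H_0^{(L_n)}$ into $H_0^{(\tilde\Lambda_n)}$ --- here the $p$-periodicity of $U$ together with $y_n\in p\Zd$ is crucial --- and $H_{1,x_{L_n}}^{(L_n)}$ into $H_{1,x_0}^{(\tilde\Lambda_n)}$, so
\begin{equation}
\xi\bigl(E;H_{1,x_{L_n}}^{(L_n)},H_0^{(L_n)}\bigr)=\xi\bigl(E;H_{1,x_0}^{(\tilde\Lambda_n)},H_0^{(\tilde\Lambda_n)}\bigr).
\end{equation}
By the definition \eqref{shift} of $A_{L_n}(x_0)$, $\tilde\Lambda_n$ is a (generally off-centred) cube of side length $L_n$ containing $\supp V_{x_0}\subseteq\Lambda_\ell(x_0)$ at distance exceeding $D(L_n)\to\infty$ from its boundary. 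It therefore suffices to prove \eqref{ssf-bound2} uniformly over all sequences of such translated cubes $(\tilde\Lambda_n)$.

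\emph{Step 2 (trace representation).} Fubini together with $\xi_L(E)=\tr[\Chi_{(-\infty,E]}(H_0^{(L)})-\Chi_{(-\infty,E]}(H_1^{(L)})]$ yields, for $f=\Chi_Ig$ with $I=[a,b]$,
\begin{equation}
\int_\RR\!\d E\;\xi_L(E)\,f(E)=\tr\bigl[G(H_0^{(L)})-G(H_1^{(L)})\bigr],\qquad G(\lambda):=\int_\lambda^\infty f(E)\,\d E,
\end{equation}
and the analogous identity in infinite volume (cf.\ (5.1) in the Appendix of \cite{HiMu10}). The function $G$ is Lipschitz, constant on $(-\infty,a]$ and vanishing on $[b,\infty)$. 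The task reduces to establishing
\begin{equation}\label{eq:uniftn}
\bigl\|\bigl[G(H_0^{(\tilde\Lambda_n)})-G(H_{1,x_0}^{(\tilde\Lambda_n)})\bigr]-\bigl[G(H_0)-G(H_{1,x_0})\bigr]\bigr\|_1\;\longrightarrow\;0 \qquad\text{as } n\to\infty,
\end{equation}
uniformly in the position of $\tilde\Lambda_n$.

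\emph{Step 3 (uniform convergence via heat kernels).} I would approximate $G$ by smooth bounded functions $G_k$ with $G-G_k$ small in a norm compatible with a uniform \emph{a priori} trace-norm bound $\|G(H_0^{(\tilde\Lambda_n)})-G(H_{1,x_0}^{(\tilde\Lambda_n)})\|_1\leq C(V,I,g)$; the bound follows from functional calculus, Birman--Solomyak-type estimates and the compact support of $V_{x_0}$, and is manifestly independent of the position of $\tilde\Lambda_n$. For each smooth $G_k$, a Helffer--Sj\"ostrand (or Laplace-transform) representation reduces \eqref{eq:uniftn} to the trace-norm convergence of resolvent or semigroup differences, which in turn follows from Duhamel's formula
\begin{equation}
\e^{-tH_{1,x_0}^{(\tilde\Lambda_n)}}-\e^{-tH_0^{(\tilde\Lambda_n)}}=-\int_0^t \e^{-(t-s)H_{1,x_0}^{(\tilde\Lambda_n)}}\,V_{x_0}\,\e^{-sH_0^{(\tilde\Lambda_n)}}\,\d s
\end{equation}
together with the standard Dirichlet heat-kernel estimate $|p_t^{\tilde\Lambda_n}(x,y)-p_t(x,y)|\leq C_t\,\e^{-c D(L_n)^2/t}$ for $x,y\in\supp V_{x_0}$. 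The right-hand side depends only on $D(L_n)\to\infty$, not on the location of $\tilde\Lambda_n$, so the resulting rate in \eqref{eq:uniftn} is automatically uniform in the cube's position.

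\emph{Main obstacle.} The crux is the uniform (in $n$ and in the position of $\tilde\Lambda_n$) \emph{a priori} trace-norm bound of Step~3: Kirsch's result \eqref{unbdd1} shows that $\xi_L$ is \emph{not} pointwise uniformly bounded in $L$, so one cannot expect such a bound for arbitrary $G$. Achieving it for our Lipschitz $G$ essentially uses the boundedness of $G$ and the compact support of $V_{x_0}$, which allow $G(H_{1,x_0}^{(L)})-G(H_0^{(L)})$ to be factorised through $V_{x_0}$ via Duhamel and controlled by the above heat-kernel bound with a constant depending only on $G$ and $V_{x_0}$. Implementing this factorisation while simultaneously tracking that no estimate deteriorates as the cube is moved around within $\RR^d$ is the main technical work.
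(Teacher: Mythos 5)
Your Step~1 (the translation by $y_n\in p\ZZ^d$ that moves the shift back onto $x_0$ at the cost of displacing the cube) is exactly the change of variables the paper performs in Step~3 of its own proof. From there on, however, you take a genuinely different route, and it is not complete.

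The paper's key device, which you do not use, is an abstract uniformity principle (Lemma~\ref{wonderland}): if for \emph{every} admissible sequence of shifts the quantity $x_n^{a_n}$ converges to the \emph{same} limit $x$, and if the $n$-th term depends only on the $n$-th shift, then the convergence is automatically uniform over all shift sequences. (One assembles a hypothetical bad subsequence into a single admissible shift sequence and derives a contradiction with pointwise convergence.) This reduces Theorem~\ref{uniform} to a purely pointwise statement: one merely needs, for each fixed $t>0$ and each fixed shift sequence, the convergence of the Laplace transform $\wtilde{\xi}^{(n)}_{x_{L_n}}(t)$ to a shift-independent limit, which the paper obtains from the Feynman--Kac formula, dominated convergence, and a Gaussian tail bound for the Brownian bridge. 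Vague convergence of the measures and the uniformity of \eqref{ssf-uniform} then come for free from Theorem~\ref{cont-laplace}. No uniform quantitative estimate, trace-norm or otherwise, is ever required.

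You, by contrast, try to prove uniformity ``by hand'' in Steps~2--3: Helffer--Sj\"ostrand or Laplace representation, Duhamel, Dirichlet heat-kernel decay, with an explicit attempt to track that no constant degrades as the cube is moved. This could in principle yield a stronger result (a uniform \emph{rate}), but as you yourself note under ``Main obstacle,'' it hinges on a uniform a priori trace-norm bound $\|G(H_0^{(\tilde\Lambda_n)})-G(H_{1,x_0}^{(\tilde\Lambda_n)})\|_1\le C$ that you assert but do not prove. This is not a cosmetic omission: $V$ is only bounded with compact support, not trace class, so the naive Birman--Solomyak estimate $\|f(A)-f(B)\|_1\lesssim\|A-B\|_1$ is unavailable, and $\tr\,G(H_j^{(L)})$ individually grows like $L^d$, so the whole content of the bound lies in a cancellation you would still have to exhibit and show is uniform in the cube's position. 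Since you explicitly leave this as ``the main technical work,'' the proposal has a genuine unresolved gap precisely at the point that the paper's Lemma~\ref{wonderland} is designed to sidestep.
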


The proof of Theorem~\ref{uniform} relies on a suitable continuity theorem for Laplace transforms, Theorem \ref{cont-laplace}, proven in the next section. We prove Theorem \ref{uniform}  in Section \ref{sec:assumptions1} by verifying the assumptions
of Theorem \ref{cont-laplace} using the Feynman-Kac representation for Schr\"o\-ding\-er semigroups.

\begin{rem}
 Theorem~\ref{uniform} also works in the context of ergodic random Schr\"o\-dinger operators, if we replace the periodic potential $U$ by a random potential $U_{\omega}$ that is ergodic with respect to a subgroup $G$ of the translation group $\Rd$. In this case the group $G$ replaces $p\Zd$ in the definition \eqref{shift} of the sets $A_{L}(x_{0})$ of allowed shift vectors. In addition, we have to replace the spectral shift functions in \eqref{ssf-uniform} by their expectations $\IE$ over randomness. In fact, our original motivation for Theorem~\ref{uniform} stems from the construction of a strictly positive lower bound on the density of states for alloy-type random Schr\"odinger operators  in the continuum \cite{HiKlMu10}, where it will be needed. As is apparent already in the lattice case of this problem \cite{HiMu08}, one needs to control a Ces\`aro sum
\begin{equation}
\label{cesaro}
 \frac{1}{(L/\ell)^{d}} \sum_{j \in \ell \Zd \cap \Lambda_{L}} \int_{\RR}\!\d E\, f(E) \, \IE \big[\xi\big(E; H_{1,j}^{(L)}, H_{0}^{(L)}\big) \big]
\end{equation}
 in the limit $L\to\infty$ for some given function $f$ as allowed by Theorem~\ref{uniform}. We apply Theorem~\ref{uniform} with the distance function $D(L) = (1/2)\log(L -\ell +1)$ to \eqref{cesaro}. Then the asserted uniformity of the convergence in $j$ implies
\begin{equation}
    \lim_{L\to\infty}\frac{\ell^{d}}{L^{d}} \sum_{j \in \ell \Zd \cap \Lambda_{L}} \int_{\RR}\!\d E\, f(E) \, \IE \big[\xi  \big(E; H_{1,j}^{(L)}, H_{0}^{(L)}\big) \big]
    = \int_{\RR}\d E\, f(E)\, \IE \big[\xi\big(E; H_{1,0}, H_{0}\big) \big].
\end{equation}
\end{rem}

%
\section{An abstract uniform convergence result for measures}
%

The uniform convergence of the finite-volume spectral shift
functions will follow from a more general result, Theorem \ref{cont-laplace}, on the continuity of the Laplace transform of Borel measures.
We begin with a simple observation:

\begin{lmm} \label{wonderland}
 For given sets $A_{n}$, $n\in \NN$, define $\AA := \raisebox{-1pt}{\Large$\times$}_{n \in\NN} A_{n}$. Consider a family of sequences in $\CC$ which is indexed by $a \equiv (a_{n})_{n\in\NN}\in \AA$ and has the property that the $n$-th sequence element depends only on $a_{n}$, but not on $a_{m}$ for $m\neq n$. We denote a sequence of this family by $(x_{n}^{a_{n}})_{n\in\NN}$. Suppose that for every $a\in\AA$  the limit
\begin{equation}
 x := \lim_{n\to\infty} x_{n}^{a_{n}}
\end{equation}
exists in $\CC$ and is independent of $a\in \AA$. Then the sequence $(x_{n}^{a_{n}})_{n\in\NN}$ converges to $x$ uniformly in $a\in\AA$.
\end{lmm}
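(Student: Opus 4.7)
The natural approach is proof by contradiction, exploiting the key structural hypothesis that $x_{n}^{a_{n}}$ depends on the chosen sequence $a\in\AA$ only through its $n$-th coordinate $a_{n}\in A_{n}$. This decoupling is what allows one to splice choices at different indices together into a single element of the infinite product $\AA$.

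More concretely, the plan is as follows. Suppose, for the sake of contradiction, that the convergence is not uniform in $a\in\AA$. Then there exist an $\varepsilon>0$, a strictly increasing sequence of indices $n_{1}<n_{2}<\ldots$ in $\NN$, and, for each $k\in\NN$, a sequence $a^{(k)}\in\AA$ such that
\begin{equation}
 \bigl| x_{n_{k}}^{a^{(k)}_{n_{k}}} - x \bigr| \ge \varepsilon.
\end{equation}
Since the left-hand side involves $a^{(k)}$ only through its $n_{k}$-th coordinate, only the elements $\alpha_{k} := a^{(k)}_{n_{k}} \in A_{n_{k}}$ matter.

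Now I would define a single element $b = (b_{n})_{n\in\NN} \in \AA$ by setting $b_{n_{k}} := \alpha_{k}$ for each $k\in\NN$ and choosing $b_{n}$ arbitrarily (but fixed) in $A_{n}$ for every index $n \notin \{n_{k}: k\in\NN\}$; this is possible provided every $A_{n}$ is non-empty, which is implicit in the hypothesis that $\AA \neq \emptyset$ (otherwise the statement is vacuous). By the hypothesis of the lemma applied to this particular $b$, we have $\lim_{n\to\infty} x_{n}^{b_{n}} = x$, so in particular the subsequence $(x_{n_{k}}^{b_{n_{k}}})_{k\in\NN}$ must also converge to $x$. But by construction $x_{n_{k}}^{b_{n_{k}}} = x_{n_{k}}^{\alpha_{k}} = x_{n_{k}}^{a^{(k)}_{n_{k}}}$, which stays at distance at least $\varepsilon$ from $x$. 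This contradiction completes the proof.

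There is essentially no serious obstacle here; the argument is an application of the standard diagonal-type contradiction. The only subtlety worth noting in the write-up is the need to justify that the $n$-th coordinate of the spliced sequence $b$ can be prescribed independently of the others, which is precisely the defining property of the Cartesian product $\AA = \bigtimes_{n\in\NN} A_{n}$, together with the single-coordinate dependence assumed for $x_{n}^{a_{n}}$.
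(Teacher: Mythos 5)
Your proof is correct and uses essentially the same diagonal/splicing contradiction as the paper: negate uniformity, extract a subsequence of indices with bad coordinates, assemble those coordinates into a single element of $\AA$, and contradict pointwise convergence along that element. The only cosmetic difference is that you explicitly note the non-emptiness of each $A_n$ when filling in the unconstrained coordinates, which the paper leaves implicit.
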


\begin{proof}
 We argue by contradiction. Assume that the limit $x$ is not approached uniformly in $a$. Then there is $\varepsilon >0$ such that for every $N\in\NN$ there is $n \ge N$ and $\alpha_{n} \in A_{n}$ such that $|x_{n}^{\alpha_{n}} - x| > \varepsilon$. In other words, there exists $\varepsilon >0$, a subsequence $(n_{j})_{j\in\NN}$ with $\lim_{j\to\infty}n_{j} = \infty$ and $\alpha_{n_{j}} \in A_{n_{j}}$ for all $j\in\NN$ such that
\begin{equation}
\label{contra}
 |x_{n_{j}}^{\alpha_{n_{j}}} - x| > \varepsilon \qquad\quad\text{for all~} j\in\NN.
\end{equation}
Define $a\in\AA$ by setting $a_{n_{j}} := \alpha_{n_{j}}$ for all $j\in\NN$ and $a_{k}$ arbirtrary for $k \notin \{n_{j}: j\in\NN \}$. But then $\lim_{n\to\infty} x_{n}^{a_{n}} = x$ by hypothesis, and hence $\lim_{j\to\infty} x_{n_{j}}^{a_{n_{j}}} = x$, contradicting \eqref{contra}.
\end{proof}

\begin{dfn}
 Let $\mu$ be a Borel measure on $\RR$. If there is $t_{0} \in\RR$ such that the integral
\begin{equation}
    \wtilde{\mu}(t) := \int_{\RR}\!\d\mu(x)\, \e^{-t x} < \infty
\end{equation}
 is finite for all $t \ge t_{0}$, we say that the \emph{(two-sided) Laplace transform} of $\mu$ exists for $t\ge t_{0}$.
\end{dfn}

We will need the following version of a continuity theorem for Laplace transforms.

\begin{thm} \label{cont-laplace}
Let $(\mu_{n}^{a_{n}})_{n\in\NN}$ be a sequence of Borel measures on $\RR$ for every $a \equiv(a_{n})_{n\in\NN} \in \AA$. Assume that for every $a\in\AA$ there exists $t_{a} \in\RR$ such that for all $n\in\NN$ the Laplace transform of $\mu_{n}^{a_{n}}$ exists for $t \ge t_{a}$. Suppose further that for every $a\in\AA$ and every $t \ge t_{a}$ the limit
\begin{equation}
 \wtilde{\mu}(t) := \lim_{n\to\infty} \wtilde{\mu}_{n}^{a_{n}}(t)
\end{equation}
 exists in $\RR$ and is independent of $a\in \AA$. Then $\wtilde{\mu}$ is the Laplace transform
of a Borel measure $\mu$ on $\RR$ and $\mu_{n}^{a_{n}}$ converges vaguely to $\mu$ as $n\to\infty$, the convergence being uniform in $a\in\AA$. In other words,
\begin{equation} \label{cont-claim}
 \lim_{n\to\infty} \int_{\RR}\!\d\mu_{n}^{a_{n}}(x)\, f(x) =
  \int_{\RR}\!\d\mu(x)\, f(x)
\end{equation}
holds uniformly in $a\in\AA$ for every given function $f$ of the form $f=\Chi_{I} g$, where $g \in C(\RR)$ and $\Chi_{I}$ is the indicator function of a (finite) interval $I \subset\RR$ whose endpoints are not charged by the measure $\mu$.
\end{thm}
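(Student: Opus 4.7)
The proof strategy is to deduce the convergence in \eqref{cont-claim} pointwise in $a \in \AA$ from a classical continuity theorem for Laplace transforms of Radon measures on $\RR$, and then to invoke Lemma~\ref{wonderland} to upgrade the pointwise statement to uniformity in $a \in \AA$. Since the limit $\int_\RR f\, d\mu$ turns out to be independent of $a$, the second step is essentially automatic, and all the substance lies in the fixed-$a$ analysis.

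Fix $a \in \AA$ and write $\mu_n := \mu_n^{a_n}$ for brevity. Because $\wtilde{\mu}_n(t_a)$ converges in $\RR$, it is bounded by some constant $C < \infty$. Since $\e^{-t_a x}$ is bounded from below by a strictly positive constant $c_{A,B}$ on any compact interval $[A,B] \subset \RR$, we get $\mu_n([A,B]) \leq C/c_{A,B}$ uniformly in $n$, so $(\mu_n)$ is locally uniformly bounded; by Helly's vague compactness theorem, every subsequence contains a further subsequence converging vaguely to some Radon measure $\mu^a$ on $\RR$. To identify $\mu^a$, fix $t > t_a$ and choose $t' \in [t_a, t)$ and $t'' > t$. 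The elementary estimates $\e^{-tx} \leq \e^{-(t-t')R}\e^{-t'x}$ for $x \geq R$ and $\e^{-tx} \leq \e^{-(t''-t)R}\e^{-t''x}$ for $x \leq -R$ combine to give
\begin{equation*}
 \int_{|x| > R}\!\e^{-tx}\,d\mu_n(x) \;\leq\; \e^{-(t-t')R}\,\wtilde{\mu}_n(t') + \e^{-(t''-t)R}\,\wtilde{\mu}_n(t''),
\end{equation*}
and the right-hand side vanishes as $R \to \infty$ uniformly in $n$, since $\wtilde{\mu}_n(t')$ and $\wtilde{\mu}_n(t'')$ are bounded sequences. Coupling this tail bound with vague convergence applied to the cut-off $\phi_R \e^{-tx} \in C_c(\RR)$, with $\phi_R \in C_c(\RR)$ taking values in $[0,1]$ and equal to $1$ on $[-R,R]$, and then using monotone convergence to pass from $\phi_R\e^{-tx}$ to $\e^{-tx}$ with respect to $\mu^a$, one obtains $\int_\RR \e^{-tx}\,d\mu^a(x) = \wtilde{\mu}(t)$ for every $t > t_a$. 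The uniqueness theorem for Laplace transforms of positive Radon measures then forces $\mu^a$ to be the same Borel measure $\mu$ for every $a$ and every vaguely convergent subsequence, so the full sequence $(\mu_n^{a_n})_n$ converges vaguely to $\mu$ for each $a \in \AA$.

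To promote vague convergence to the test functions $f = \Chi_I g$ of the statement, one sandwiches $\Chi_I$ between two $C_c(\RR)$-approximations differing from $\Chi_I$ only in an arbitrarily small neighborhood of the endpoints $\partial I$, applies vague convergence to the two resulting continuous compactly supported test functions, and uses $\mu(\partial I)=0$ to close up the sandwich. This yields $\lim_n \int_\RR f\, d\mu_n^{a_n} = \int_\RR f\, d\mu$ for every $a \in \AA$, and since the limit does not depend on $a$, Lemma~\ref{wonderland} upgrades this to convergence uniform in $a \in \AA$. The main obstacle is the uniform tail control in the identification step, which forces us to exploit the Laplace transforms at two distinct values $t',t''$ flanking $t$; this is the only place where the full domain $[t_a, \infty)$ of the Laplace transform is genuinely used, and it is also the reason the theorem requires the Laplace transform to exist on a half-line rather than at a single point.
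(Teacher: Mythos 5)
Your proof is correct and follows the same two-step strategy as the paper: first establish pointwise (in $a\in\AA$) convergence via a classical continuity theorem for Laplace transforms, then upgrade to uniformity through Lemma~\ref{wonderland}, exploiting that the limit is independent of $a$. The only real difference is that you prove the continuity theorem from scratch (local uniform boundedness and Helly's vague compactness, uniform tail control using the Laplace transforms at two parameters flanking $t$, uniqueness of two-sided Laplace transforms, and a portmanteau sandwich to pass from $C_c(\RR)$ to $\Chi_{I}g$ with $\mu(\partial I)=0$), whereas the paper simply cites Feller's Theorem~2a of Section~XIII.1 adapted to two-sided transforms, so your version is more self-contained but not structurally different.
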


\begin{proof}
Existence of the limiting measure $\mu$ and pointwise convergence for every $a\in\AA$ of the limit in \eqref{cont-claim} for every given $f$ of the specified form is a standard continuity theorem for Laplace transforms, see for example \cite[Thm.\ 2a in Sect.\ XIII.1]{Fel71} and replace one-sided by two-sided Laplace transforms there. Uniformity of the convergence in $a\in \AA$ then follows from Lemma~\ref{wonderland} applied to the sequence $\big(\int_{\RR}\!\d\mu_{n}^{a_{n}}(x)\, f(x)\big)_{n\in\NN}$.
\end{proof}

%
\section{Proof of Theorem~\ref{uniform}}\label{sec:assumptions1}
%

We prove Theorem \ref{uniform} on the uniform convergence of the finite-volume spectral shift
functions using Theorem \ref{cont-laplace}.

\begin{proof}[Proof of Theorem~\ref{uniform}] 
1. We verify the assumptions of Theorem~\ref{cont-laplace}.
To this end we fix $x_{0}$ in the periodicity cell of $U$ and a
sequence $(L_{n})_{n\in\NN}$ of diverging lengths. Thanks to Dirichlet-Neumann bracketing and since both $U$ and $V$ are Kato decomposable, we have
$\mathrm{inf\,spec\,} H_{1, x_{L_{n}}}^{(L_{n})} \ge \mathrm{inf\,spec\,} H_{1, x_{0}} > -\infty$ for all $x_{L_{n}} \in  A_{L_{n}}(x_{0})$ and all $n\in\NN$. From this we infer that for every $n\in\NN$ and for
every shift $x_{L_{n}} \in  A_{L_{n}}(x_{0})$ the two-sided Laplace
transform
\begin{equation}
\begin{split}
 \wtilde{\xi}^{(n)}_{x_{L_{n}}}(t) &:= \int_{\RR}\!\d E\, \e^{-tE} \,
 \xi\big(E; H_{1,x_{L_{n}}}^{(L_{n})}, H_{0}^{(L_{n})}\big) \\
 &\phantom{:}=  \frac{1}{t} \;\tr_{\mathrm{L}^{2}(\Lambda_{L_{n}})}
\left[ \exp\big(-t H_{0}^{(L_{n})} \big) - \exp\big(-t H_{1, x_{L_{n}}}^{(L_{n})} \big) \right]
\end{split}
 \end{equation}
exists for every $t >0$. So fix $t>0$ from now on. 

\par\noindent
2. The standard Feynman--Kac representation \cite{Sim79} of the heat kernel gives
\begin{equation}
\begin{split}
   \varphi^{(n)}_{x_{L_{n}}}(t) &:=  t (2 \pi t)^{d/2} \; \wtilde{\xi}^{(n)}_{x_{L_{n}}}(t)\\
  &\phantom{:}= \int_{\Lambda_{L_{n}}} \!\d x \;
  ~\EE_{x,x}^{0,t} \left[ \Chi_{\Lambda_{L_{n}}}^{\,t} (b) \, \e^{- \int_0^t \d s \,
      U(b(s) )} \left( 1 - \e^{- \int_0^t \d s \, V_{x_{L_{n}}}(b(s)) } \right) \right]  \\
   &\phantom{:}= \int_{\Lambda_{L_{n}}} \!\d x \;
  ~\EE_{x,x}^{0,t} \left[ \Chi_{\Lambda_{L_{n}}}^{\,t} (b) \,  \cU_{t}(b) \,\cV_{t}(b- x_{L_{n}}) \right].
\end{split}
 \end{equation}
Here, $\EE_{x,y}^{0,t}$ denotes the normalized expectation over all Brownian
bridge paths $b$ starting at $x\in\Rd$ at time $s=0$ and ending at
$y\in\Rd$ at time $s=t$. The Dirichlet boundary condition is taken into
account by the cut-off functional $\Chi_{\Lambda}^{\,t}(b)$, which is equal to
one if $b(s) \in \Lambda$ for all $s \in [0, t]$, and zero otherwise.
Moreover, we have introduced the Brownian functionals
$\cU_{t}(b) := \e^{- \int_0^t \d s \, U(b(s))}$ and $\cV_{t}(b) := 1- \e^{- \int_0^t \d s \, V(b(s))}$.

\par\noindent
3. We shift the integration variables $x \rightarrow x + x_{L_{n}}$ and $b \rightarrow b + x_{L_{n}}$. This and the periodicity of $U$ results in
\begin{equation}
\label{conv-start}
\begin{split}
   \varphi^{(n)}_{x_{L_{n}}}(t)  &= \int_{\Rd} \!\d x \, \Chi_{\Lambda^{n}}(x) \; \EE_{x,x}^{0,t} \left[
   \Chi_{\Lambda^{n}}^{\,t} (b) \, \cU_{t}(b+x_{0}) \, \cV_{t}(b) \right]
   = \int_{\Rd} \!\d x \, \Chi_{\Lambda^{n}}(x) \, F_{n}(x)
\end{split}
\end{equation}
where we have introduced the abbreviations $\Lambda^{n} := \Lambda_{L_{n}}(-x_{L_{n}})$ and
\begin{equation}\label{eq:functional1}
 F_{n}(x) := \EE_{x,x}^{0,t} \left[
   \Chi_{\Lambda^{n}}^{\,t} (b) \, \cU_{t}(b+x_{0}) \, \cV_{t}(b) \right]  .
\end{equation}

\par\noindent
4. We evaluate the limit as $n \rightarrow \infty$ of $F_n (x)$  in (\ref{eq:functional1}).
We observe that $\Lambda^{n} \supseteq \Lambda_{\ell +2 D(L_{n})}$,
independently of $x_{L_{n}} \in A_{L_{n}}(x_{0})$ for every $n\in\NN$, whence
\begin{equation}
 \lim_{n\to\infty} \Chi_{\Lambda^{n}}(x) = 1 \qquad \text{and} \qquad
 \lim_{n\to\infty} \Chi^{\,t}_{\Lambda^{n}}(b) = 1
\end{equation}
for every $x\in\Rd$, $\PP_{0,0}^{0,t}$-a.e.\ Brownian bridge path $b$ and every choice of shifts
\begin{equation}
 (x_{L_{n}})_{n\in\NN} \in \bigtimes_{n \in\NN} A_{L_{n}}(x_{0}) =: \AA(x_{0}).
\end{equation}
Here, we have exploited the $\PP_{0,0}^{0,t}$-a.s continuity of $s \mapsto b(s)$.
Next, the expectation of the functiona $\mathcal{U}_t (b)$ is controlled
using \cite[Eqs.\ (6.20), (6.21)]{BrHuLe00} that state that
\begin{equation}
    \label{BHL}
 \sup_{x\in\Rd} \EE_{x,x}^{0,t} \big[ \e^{- \int_0^t \d s \, W(b(s))}\big] < \infty
\end{equation}
for every $W \in\cK_{\pm}(\Rd)$. This bound (\ref{BHL}) and dominated convergence yield the existence of
\begin{equation}
\label{inner-limit}
 F(x) := \lim_{n\to\infty} F_{n}(x) =  \EE_{x,x}^{0,t} \big[ \cU_{t}(b+x_{0}) \, \cV_{t}(b) \big]
\end{equation}
for every $x\in\Rd$ and every $(x_{L_{n}})_{n\in\NN} \in \AA(x_{0})$.

\par\noindent
5. Finally, for every $x\in\Rd$, every $n\in\NN$ and every $x_{L_{n}} \in A_{L_{n}}(x_{0})$ we have the estimate
\begin{equation}\label{eq:upperbd1}
    |F_{n}(x)| \le  G(x) := \EE_{x,x}^{0,t} \left[  \Chi_{\Xi_{x}^{t}}(b) \, \cU_{t}(b+x_{0}) \, \Big( 1 +  \e^{- \int_0^t \d s \, V(b(s))} \Big) \right] ,
\end{equation}
where $\Xi_{x}^{t}$ denotes the event that $\sup_{s \in [0,t]} |b(s) -x| > \dist\big(x, \supp(V)\big)$.
The Cauchy-Schwarz inequality and \eqref{BHL} then imply
\begin{equation}\label{eq:upperbd2}
 G(x) \le c \left(\PP_{x,x}^{0,t}[\Xi_{x}^{t}] \right)^{1/2}=  c \left(\PP_{0,0}^{0,t} \bigg[ \max_{s \in [0,t]} |b(s)| > \dist\big(x, \supp(V)\big) \bigg] \right)^{1/2}
\end{equation}
for an $x$-independent constant $c \in ]0,\infty[$. The probability
in the last line is the complement of the distribution function for
the maximum of a $d$-dimensional Bessel bridge \cite{YoZa04} and has
a Gaussian decay, see, for example, \cite[p.~341, Lemma 1]{kirsch88}
or \cite[p.~438]{GrSh96}. This yields $G \in \mathrm{L}^{1}(\Rd)$.

\par\noindent
6. We now combine (\ref{inner-limit}) and the upper bound (\ref{eq:upperbd1}), (\ref{eq:upperbd2})
to evaluate the limit of \eqref{conv-start}. Dominated convergence applied to \eqref{conv-start} gives for
all $(x_{L_{n}})_{n\in\NN} \in \AA(x_{0})$
\begin{equation}
\label{limit}
   \lim_{n\to\infty} \varphi^{(n)}_{x_{L_{n}}}(t)  = \int_{\Rd}\!\d x\, \lim_{n\to\infty}
   F_{n}(x) = \int_{\Rd}\!\d x\, \EE_{x,x}^{0,t} \big[ \cU_{t}(b) \, \cV_{t}(b-x_{0}) \big],
\end{equation}
where we have used another change of variables. Therefore Theorem~\ref{uniform} follows from
\begin{equation}
\begin{split}
\lim_{n\to\infty}   \wtilde{\xi}^{(n)}_{x_{L_{n}}}(t) &=  \frac{1}{t} \; (2\pi t)^{-d/2}  \int_{\Rd}\!\d x\, \EE_{x,x}^{0,t} \big[ \cU_{t}(b) \, \cV_{t}(b-x_{0}) \big] \\
     &=  \frac{1}{t} \;\tr_{\mathrm{L}^{2}(\Rd)} \left[ \exp(-t H_{0}) - \exp(-t H_{1, x_{0}}) \right] \\
 &= \int_{\RR}\!\d E\, \e^{-tE} \,
 \xi\big(E; H_{1,x_{0}}, H_{0}\big)
\end{split}
\end{equation}
for all $(x_{L_{n}})_{n\in\NN} \in \AA(x_{0})$ and from Theorem~\ref{cont-laplace}.
\end{proof}


\begin{acknowledgement}
  PM would like to thank Nariyuki Minami, Shu Nakamura, Fumihiko Nakano and
  Naomasa Ueki for their splendid hospitality during the \emph{Kochi School on Random
  Schr\"odinger operators} and the RIMS-workshop \emph{Spectra of Random Operators and Related Topics}
  in Kyoto in November and December 2009. PM also acknowledges Edgardo Stockmeyer and Ingo Wagner
  for interesting discussions. The authors thank the Centre
  Interfacultaire Bernoulli in Lausanne, Switzerland, for hospitality and the Swiss NSF for
  partial financial support during the time this work was finished.
\end{acknowledgement}


\end{document}